\newtheorem{Claim}{Claim}
\newtheorem{Lemma}{Lemma}
\newtheorem{Corollary}{Corollary}
\newtheorem{Problem}{Question}
\newtheorem{Theorem}{Theorem}
\newtheorem{Proposition}{Proposition}
\title{Avoidability of long $k$-abelian repetitions}
\author {M. Rao, M. Rosenfeld\\LIP, ENS de Lyon, CNRS, Universit\'e de Lyon}
\begin{document}

\maketitle

\begin{abstract}
We study the avoidability of long $k$-abelian-squares and $k$-abelian-cubes on binary and ternary alphabets.
For $k=1$, these are M\"akel\"a's questions.
We show that one cannot avoid abelian-cubes of abelian period at least $2$ in infinite binary words, and therefore answering negatively one question from  M\"akel\"a.
Then we show that one can avoid $3$-abelian-squares of period at least $3$ in infinite binary words and $2$-abelian-squares of period at least 2 in infinite ternary words. Finally we study the minimum number of distinct $k$-abelian-squares that must appear in an infinite binary word.
\end{abstract}

\section{Introduction}
Avoidability of structures and patterns has been extensively studied in theoretical computer science since the work of Thue on avoidability of repetitions in words \cite{Thue1}. 
Thue showed that there are infinitely long ternary words avoiding squares (factors of the form $ww$ where $w$ is a word) and infinitely long binary words avoiding cubes (factors of the form $www$ where $w$ is a word). 

The avoidability of abelian repetitions has been studied since a question from Erd\H{o}s in 1957 \cite{erdos1, erdos2}.
A factor $uv$ is an abelian-square if $u$ is a permutation of the letters of $v$. 
Erd\H{o}s asked whether it is possible to avoid abelian-squares in an infinite word over an alphabet of size 4. 
(Abelian-square-free ternary words have a length of at most 7.)
After some intermediary results (alphabet of size 25 by Evdokimov \cite{Evdokimov1} and size 5 by Pleasant \cite{pleasant1}), Ker{\"a}nen answered positively Erd\H{o}s's question by giving a 85-uniform morphism (found with the assistance of a computer) whose fixed-point is abelian-square-free~\cite{keranen1}. 
Moreover, Dekking showed that it is possible to avoid abelian-cubes on a ternary alphabet and abelian-4th-powers over a binary alphabet \cite{Ternarycube}. 

Erd\H{o}s also asked if it is possible to avoid arbitrarily long ordinary squares on binary words. This question was answered positively by Entringer, Jackson and Schatz~\cite{Entringer1974159}. 
M\"akel\"a asked the following two questions about the avoidability of long abelian-cubes (resp. squares) on a binary (resp. ternary) alphabet:
\begin{Problem}[M\"akel\"a (see \cite{keranen2})]
\label{makquest}
Can you avoid abelian-cubes of the form $uvw$ where $|u|\geq 2$, over two letters ? - You can do this at least for words of length 250.
\end{Problem}
\begin{Problem}[M\"akel\"a (see \cite{keranen2})]
\label{makquestsquares}
Can you avoid abelian-squares of the form $uv$ where $|u|\geq 2$ over three letters ?  - Computer experiments show that you can avoid these patterns at least in words of length 450.
\end{Problem}

The notion of $k$-abelian repetition has been introduced recently by Karhum\"aki \emph{et al.} as a generalization of both repetition and abelian repetition \cite{Karhumaki1}.
One can avoid $3$-abelian-squares (resp. $2$-abelian-cubes) on ternary (resp. binary) words \cite{abeliancube}.
Following Erd\H{o}s's and M\"akel\"a's questions, one can ask whether it is possible to avoid long $k$-abelian-powers on binary (resp. ternary) words.  

In Section \ref{sec:mak}, we answer negatively M\"akel\"a's Question \ref{makquest}, and we propose a new version of the conjecture.
In Section \ref{sec:kab}, we show that one can avoid $3$-abelian-squares of period at least $3$ in binary words and $2$-abelian-squares of period at least $2$ in ternary words.
In Section \ref{sec:howmanykab}, we study the minimum number of distinct $k$-abelian-squares that must appear in an infinite binary word.
Finally, in Section \ref{sec:concl}, we explain the computer searches we use to find the morphisms of Section \ref{sec:kab} and Section \ref{sec:howmanykab}.

\section{Preliminaries and definitions}
We use terminology and notations of Lothaire~\cite{Lothaire}.
Let $\Sigma$ be a finite alphabet.
For a word $u\in \Sigma^*$ and $a\in \Sigma$, we denote by $|u|_a = |\{i : u[i]=a\}|$  the number of occurrences of the letter $a$  in $u$. 
For $u,w\in \Sigma^*$, we denote by $|u|_w = |\{i : u[i:i+|w|-1]=w\}|$ the number of occurrences of the factor $w$ in $u$.

Two words $u$ and $v$ are said to be \emph{abelian equivalent}, denoted $u \approx_a v$, if for every $a \in \Sigma$, $ |u|_a=|v|_a$, 
and they are said \emph{$k$-abelian equivalent} (for $k\ge 1$), denoted $u \approx_{a,k} v$, if for every $w\in \Sigma^*$ such that $|w| \leq k$, $|u|_w=|v|_w$. 
A word $u_1u_2\ldots u_n$ is a \emph{$k$-abelian-$n$-power} if it is non-empty, and $u_1 \approx_{a,k} u_2 \approx_{a,k} \ldots  \approx_{a,k} u_n$. Its \emph{period} is $\vert u_1\vert$.
Similarly, a \emph{$k$-abelian-square} (resp. \emph{$k$-abelian-cube}) is a $k$-abelian-2-power (resp. $k$-abelian-3-power). 
A word is said to be \emph{$k$-abelian-$n$-power-free} if none of its factors is a $k$-abelian-$n$-power.
Note that when $k=1$, the $k$-abelian-equivalence is exactly the abelian equivalence, and we ommit the ``1-'' prefix in this case.

The \emph{Parikh vector} of a word $w\in \Sigma^*$, denoted $\Psi(w)$, is the vector indexed by $\Sigma$ such that for every $a\in \Sigma$, $\Psi(w)[a]= |w|_a$. 
Then, by definition, two words $u$ and $v$ are abelian-equivalent if $\Psi(u)= \Psi(v)$.
For a set $S\subset \Sigma^*$ and a word $w\in \Sigma^*$, we denote by $\Psi_S(w)$ the vector indexed by $S$ such that for every $s\in S$, $\Psi_S(w)[s]= |w|_s$. 
We may write $\Psi_k(w)$ instead of $\Psi_{\Sigma^k}(w)$ if $\Sigma$ is clear in the context. %

For all $u\in \Sigma^*,\ i \leq |u|$, let $\operatorname{pref}_i(u)$ be the prefix of size $i$ of $u$ and $\operatorname{suf}_i(u)$ be the suffix of size $i$ of $u$. 
There are equivalent definitions of $k$-abelian equivalence (see \cite{Karhumaki1}). 
Two words of size at most $2k-1$ are $k$-abelian equivalent if and only if they are equal.
For every two words $u$ and $v$ of size at least $k-1$, the following conditions are equivalent: 
\begin{itemize}
	\item $u$ and $v$ are $k$-abelian equivalent (\emph{i.e.} $u \approx_{a,k} v$),
	\item $\Psi_k(u)= \Psi_k(v)$ and $\operatorname{pref}_{k-1}(u)=\operatorname{pref}_{k-1}(v)$,
	\item $\Psi_k(u)= \Psi_k(v)$ and $\operatorname{suf}_{k-1}(u)=\operatorname{suf}_{k-1}(v)$.
\end{itemize} 

\section{Abelian cubes and M\"akel\"a's Question~\ref{makquest}}\label{sec:mak}
Dekking showed that it is possible to avoid abelian-cubes in an infinite word over a ternary alphabet \cite{Ternarycube}. 
More recently Rao showed that one can avoid $2$-abelian-cubes over a binary alphabet \cite{abeliancube} and one can check that every word over a binary alphabet of length greater than 9 contains an abelian-cube.
The only open question about the avoidability of long $k$-abelian-cubes on infinite words is then the avoidability of long abelian-cubes over the binary alphabet.
This is the subject of Question \ref{makquest} from M\"akel\"a: he asked whether one can avoid every abelian-cubes of period at least 2 in binary words. We answer negatively this question.

For this,
we used a property of Lyndon words that made the exhaustive search much faster. 
A word $w\in\Sigma^*$ is a \emph{Lyndon word} if for all $u,v \in\Sigma^+$ such that $w=uv$, $w<_{lex}vu$, where $<_{lex}$ is the lexicographic order. 
The well known Chen-Fox-Lyndon Theorem 
states that every word may be written uniquely as a concatenation of non-increasing Lyndon words (see for example \cite{Lothaire}). In the following, we refer to this decomposition as the Lyndon factorization.
A language $L \subset \Sigma^*$ is \emph{factorial} if for every $w$ in $L$, every factor of $w$ is in $L$. 

\begin{Lemma}\label{Lyndonprop}
Any factorial language $L$ with arbitrarily long words contains arbitrarily long Lyndon words or repetitions of arbitrarily large power.
\end{Lemma}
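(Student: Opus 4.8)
The plan is to argue by contraposition, in the following form: assume that $L$ does \emph{not} contain repetitions of arbitrarily large power, i.e.\ that there is an integer $N$ such that no factor of a word of $L$ is an $N$-th power, and deduce that $L$ must contain Lyndon words of every length. Fix a target length $\ell$. Since $\Sigma$ is finite, the set of Lyndon words over $\Sigma$ of length less than $\ell$ is finite; call its cardinality $D$. Using the hypothesis that $L$ has arbitrarily long words, pick $w\in L$ with $|w| > N\,\ell\,D$.

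Next I would feed $w$ to the Chen--Fox--Lyndon theorem: write its Lyndon factorization $w=\ell_1\ell_2\cdots\ell_m$ with each $\ell_i$ a Lyndon word and $\ell_1 \geq_{lex} \ell_2 \geq_{lex} \cdots \geq_{lex} \ell_m$. Each $\ell_i$ is a factor of $w$, hence belongs to $L$ because $L$ is factorial. If some $|\ell_i|\ge\ell$ we are done, so suppose every $\ell_i$ has length $<\ell$; then $m \ge |w|/\ell > N D$, so by the pigeonhole principle some Lyndon word $u$ with $|u|<\ell$ is equal to at least $N+1$ of the $\ell_i$'s. The crucial point is that in the Lyndon factorization the factors are \emph{non-increasing}, so equal factors occur consecutively: the maximal block of these equal occurrences is exactly $u^t$ for some $t\ge N$, and it is a factor of $w$. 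By factoriality $u^t\in L$, contradicting the choice of $N$. Hence $L$ contains a Lyndon word of length at least $\ell$; as $\ell$ was arbitrary, $L$ contains arbitrarily long Lyndon words.

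The only real subtlety — and the reason the statement holds at all — is the interplay between monotonicity of the Lyndon factorization and factoriality of $L$: monotonicity turns a repeated short Lyndon factor into a genuine high power sitting inside a word of $L$, and factoriality then pulls that power into $L$. Everything else is a routine pigeonhole count, so I do not expect a further obstacle; one just has to keep in mind that $\Sigma$ is finite (as assumed), which is exactly what makes $D$ finite so that the counting goes through.
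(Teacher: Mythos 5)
Your proof is correct and follows essentially the same route as the paper: apply the Chen--Fox--Lyndon factorization, use factoriality to keep the Lyndon factors inside $L$, and use a pigeonhole count together with the non-increasing order (which forces equal factors to be consecutive) to extract a high power. The only difference is the direction of the contraposition (bounded powers $\Rightarrow$ long Lyndon words, instead of bounded Lyndon words $\Rightarrow$ large powers), which is an immaterial variation.
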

\begin{proof}
Let us assume that there are no arbitrarily long Lyndon words in $L$.
This implies that there is a finite number $n$ of Lyndon words in $L$ and $s\in \mathbb{N}$ such that for every Lyndon word $w$ in $L$, $|w| \leq s$. 
Let $w_1, \ldots, w_n \in L^n$ be the Lyndon words of $L$ ordered by decreasing lexicographic order.

Then using the Lyndon factorization for every $w\in L$ there are some Lyndon words $L_1\geq_{lex}L_2\geq_{lex}\ldots \geq_{lex}L_d$ such that $w = L_1\ldots  L_d$. 
The fact that our language is factorial tells us that all the $L_i$ are in $L$. 
We get that for every $w\in L$ there are $\alpha_1, \ldots , \alpha_n\in \mathbb{N}$ such that $w= w_1^{\alpha_1}\ldots w_n^{\alpha_n}$. 
Then $|w| \leq  \sum_i |w_i| \times \alpha_i \leq s \times \sum_i \alpha_i \leq s \times n \times \operatorname{max_i}(\alpha_i).$

Since $L$ contains arbitrarily long words, then for any $t\in \mathbb{N}$ there is a $w\in L$ such that $|w|\geq t\times s \times n$.
Let $j\in\{1,\ldots,n\}$ be such that $\alpha_j=\operatorname{max_i}(\alpha_i)$. Thus $\alpha_j\geq t$ and then $(w_j)^t \in L$.
Thus we have arbitrarily long powers in $L$.
\end{proof}

A set of words that avoid certain kind of abelian repetitions is a factorial language and does not contain arbitrarily large powers.
Thus we just need to check that there are no arbitrarily long Lyndon words in it to deduce that this set does not contain arbitrarily long words.
The exhaustive search on prefixes of Lyndon words is then much shorter. %
Figure \ref{tree} shows how it helps for the exhaustive search of binary words avoiding abelian-squares of period at least 2.
The next Proposition answers negatively Question \ref{makquest}.

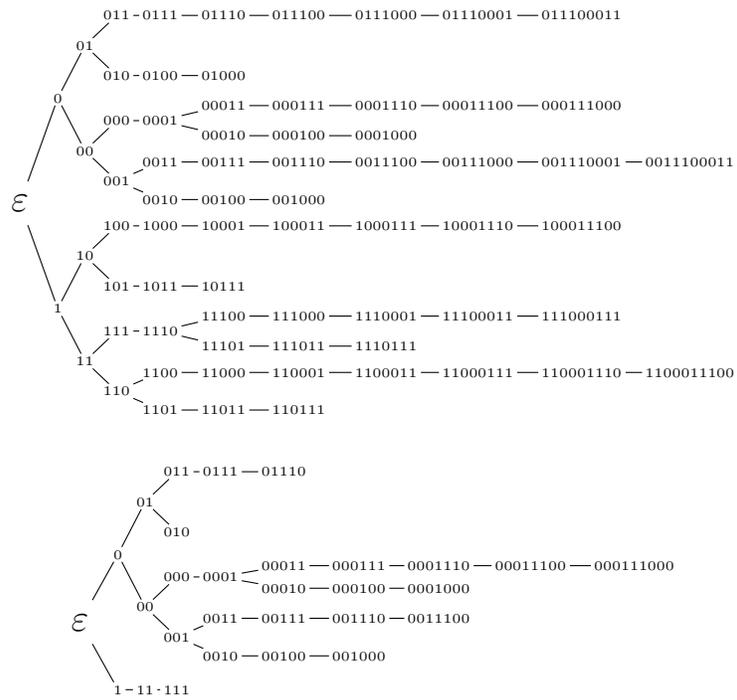
\begin{figure}
\centering
\begin{tikzpicture}[grow=right, sloped]
\tikzset{every node/.style={inner sep = 1pt, font=\tiny}}
\tikzstyle{level 1}=[anchor = west, sibling distance=28mm,level distance=4mm] 
\tikzstyle{level 2}=[sibling distance=14mm,level distance=2mm] 
\tikzstyle{level 3}=[sibling distance=8mm,level distance=2mm] 
\tikzstyle{level 4}=[sibling distance=5mm,level distance=3mm] 
\tikzstyle{level 5}=[sibling distance=4mm,level distance=5mm] 
\tikzstyle{level 6}=[sibling distance=4mm,level distance=6mm] 
\tikzstyle{level 7}=[sibling distance=3mm,level distance=7mm] 
\tikzstyle{level 8}=[sibling distance=2mm,level distance=7mm] 
\tikzstyle{level 9}=[sibling distance=2mm,level distance=8mm] 
\tikzstyle{level 10}=[sibling distance=2mm,level distance=8mm] 
\node[inner sep = 5pt] {\Large{$\varepsilon$} }
	child { node {1}
		child{ node{11}
			child{node{110}
				child{node{1101}	
					child{node{11011}	
						child{node{110111}}
					}
				}
				child{node{1100}
					child{node{11000}
						child{node{110001}
							child{node{1100011}
								child{node{11000111}
									child{node{110001110}
										child{node{1100011100}}
									}
								}
							}
						}
					}
				}
			}
			child{node{111}
				child{node{1110}
					child{node{11101}
						child{node{111011}
							child{node{1110111}}
						}
					}
					child{node{11100}
						child{node{111000}
							child{node{1110001}
								child{node{11100011}
									child{node{111000111}}
								}
							}
						}
					}
				}
			}
		}
		child{ node{10}
			child{ node{101}
				child{ node{1011}
					child{ node{10111}}
				}
			}
			child{ node{100}
				child{ node{1000}
					child{ node{10001}
						child{ node{100011}
							child{ node{1000111}
								child{ node{10001110}
									child{ node{100011100}}
								}
							}
						}
					}
				}
			}
		}
	}
child { node {0}
		child{ node{00}
			child{node{001}
				child{node{0010}	
					child{node{00100}	
						child{node{001000}}
					}
				}
				child{node{0011}
					child{node{00111}
						child{node{001110}
							child{node{0011100}
								child{node{00111000}
									child{node{001110001}
										child{node{0011100011}}
									}
								}
							}
						}
					}
				}
			}
			child{node{000}
				child{node{0001}
					child{node{00010}
						child{node{000100}
							child{node{0001000}}
						}
					}
					child{node{00011}
						child{node{000111}
							child{node{0001110}
								child{node{00011100}
									child{node{000111000}}
								}
							}
						}
					}
				}
			}
		}
		child{ node{01}
			child{ node{010}
				child{ node{0100}
					child{ node{01000}}
				}
			}
			child{ node{011}
				child{ node{0111}
					child{ node{01110}
						child{ node{011100}
							child{ node{0111000}
								child{ node{01110001}
									child{ node{011100011}}
								}
							}
						}
					}
				}
			}
		}
	}
;
\end{tikzpicture}

\vspace*{.6cm}

\centering
\begin{tikzpicture}[grow=right, sloped]
\tikzset{every node/.style={inner sep = 1pt, font=\tiny}}
\tikzstyle{level 1}=[anchor = west, sibling distance=18mm,level distance=4mm] 
\tikzstyle{level 2}=[sibling distance=14mm,level distance=2mm] 
\tikzstyle{level 3}=[sibling distance=8mm,level distance=2mm] 
\tikzstyle{level 4}=[sibling distance=5mm,level distance=3mm] 
\tikzstyle{level 5}=[sibling distance=3mm,level distance=5mm] 
\tikzstyle{level 6}=[sibling distance=2mm,level distance=6mm] 
\tikzstyle{level 7}=[sibling distance=2mm,level distance=6mm] 
\tikzstyle{level 8}=[sibling distance=2mm,level distance=7mm] 
\tikzstyle{level 9}=[sibling distance=2mm,level distance=8mm] 
\tikzstyle{level 10}=[sibling distance=2mm,level distance=8mm] 
\node[inner sep = 5pt] {\Large{$\varepsilon$} }
	child { node {1}
		child{ node{11}
			child{node{111}}
		}
	}
	child { node {0}
		child{ node{00}
			child{node{001}
				child{node{0010}	
					child{node{00100}	
						child{node{001000}}
					}
				}
				child{node{0011}
					child{node{00111}
						child{node{001110}
							child{node{0011100}}
						}
					}
				}
			}
			child{node{000}
				child{node{0001}
					child{node{00010}
						child{node{000100}
							child{node{0001000}}
						}
					}
					child{node{00011}
						child{node{000111}
							child{node{0001110}
								child{node{00011100}
									child{node{000111000}}
								}
							}
						}
					}
				}
			}
		}
		child{ node{01}
			child{ node{010}	}
			child{ node{011}
				child{ node{0111}
					child{ node{01110}
					}
				}
			}
		}
	}
;
\end{tikzpicture}
\caption{Top: the exhaustive search of binary words avoiding abelian-squares of period at least 2. Bottom: the same exhaustive search restricted on the prefixes of Lyndon words.}
\label{tree}
\end{figure}

\begin{Proposition}\label{nocube}
There is no infinite word over a binary alphabet avoiding abelian-cubes of period at least $2$. %
\end{Proposition}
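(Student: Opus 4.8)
The plan is to prove that the language $L$ of binary words containing no abelian-cube of period at least $2$ is finite; since every prefix of an infinite word avoiding such cubes would belong to $L$, this is what we need. The starting observation is that $L$ is factorial and contains no repetition of large exponent: if $u^t\in L$ then $t\le 5$, because for $|u|\ge 2$ the factor $u^3$ is an abelian-cube of period $|u|\ge 2$, and for $|u|=1$ the factor $u^6=(u^2)^3$ is an abelian-cube of period $2$. Hence, by Lemma~\ref{Lyndonprop}, since $L$ has no arbitrarily large powers, if $L$ contained arbitrarily long words it would contain arbitrarily long Lyndon words. So it suffices to show that the Lyndon words in $L$ have bounded length.

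To do that I would run an exhaustive depth-first search, using the fact that every prefix of a Lyndon word is again the prefix of a Lyndon word and, by factoriality of $L$, still lies in $L$. Starting from $\varepsilon$, at each node one appends a letter and keeps the new string only if it is still in $L$ and is still a \emph{pre-Lyndon word} (a word of the form $u^kv$ with $u$ a Lyndon word and $v$ a proper prefix of $u$); this condition is stable under taking prefixes and is satisfied by every prefix of a Lyndon word, so the search visits every Lyndon word of $L$. Over the binary alphabet the pre-Lyndon restriction is already severe --- below the letter $1$ only the words $1^j$ survive, and these are killed by the abelian-cube condition once $j\ge 6$ --- and together with early detection of short abelian-cubes of period at least $2$ it prunes the tree drastically, exactly as illustrated for the analogous abelian-square problem in Figure~\ref{tree}. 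The (finite) computation then shows that this search terminates, hence that $L$ contains only finitely many Lyndon words, all of length at most some explicit bound $N$.

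Finally, feeding $s=N$ and the exponent bound $5$ into the length estimate in the proof of Lemma~\ref{Lyndonprop} bounds $|w|$ for every $w\in L$, so $L$ is finite and the Proposition follows. The one genuine difficulty is computational feasibility: a naive enumeration of $L$ is out of reach (M\"akel\"a already reports binary words of length $250$ with this property), and the whole point is that restricting the search to prefixes of Lyndon words --- which is legitimate precisely because of Lemma~\ref{Lyndonprop} --- is what makes it finish.
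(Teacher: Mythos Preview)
Your proposal is correct and follows essentially the same route as the paper: bound the exponent of repetitions in $L$, invoke Lemma~\ref{Lyndonprop}, and reduce the problem to an exhaustive search over (prefixes of) Lyndon words in $L$. The paper carries out exactly this search and reports that it terminates after about three hours, visiting $2\,732\,711\,352$ prefixes of Lyndon words, with the longest Lyndon word in $L$ having length $290$; your final step of converting this into an explicit bound on $|w|$ via the estimate in the proof of Lemma~\ref{Lyndonprop} is a harmless extra, since the contrapositive of the lemma already yields that $L$ has no arbitrarily long words.
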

We checked %
using a computer program that there are only finitely many Lyndon words over a binary alphabet avoiding abelian-cubes of period at least $2$. 
The program took approximately 3 hours to find all 2\,732\,711\,352 such Lyndon words and prefixes of Lyndon words. The longest word has a length of 290. 
Using Lemma \ref{Lyndonprop} we deduce that there is no infinite binary word avoiding abelian-cubes of size at least two.
Then we can reformulate the question and ask:
\begin{Problem}\label{iscube}
Is there a $p\in \mathbb{N}$ such that one can avoid abelian-cubes of period at least $p$ over two letters ? 
\end{Problem}
For $p=3$, we found a word of lenght 2\,500.

\section{Avoiding long $k$-abelian-squares on binary words}\label{sec:kab}
It is easy to verify that one cannot avoid squares of period at least 2 over a binary alphabet.
Entringer \emph{et al.} %
showed that it is possible to construct a binary word avoiding squares of period at least 3 \cite{Entringer1974159}. 
They also showed that every infinite binary word contains arbitrarily long abelian-squares.
Thus, one can wonder about the avoidability of large $k$-abelian-squares. Rao asked the following question:
\begin{Problem}[Rao \cite{abeliancube}]\label{kabeliansquarequestion}
What is the smallest k (if any) such that arbitrarily long $k$-abelian-squares can be avoided over a binary alphabet ?
\end{Problem}
Since arbitrarily long abelian squares cannot be avoided in binary words, $k$ is at least $2$.
In this section, we show that $k$ is at most $3$, that is one can avoid long $3$-abelian-squares over a binary alphabet, 
by giving a morphism whose fixed point avoids $3$-abelian-squares of period at least $3$.
A morphism $h$ is said \emph{$(p,k)$-abelian-square-free} if for every abelian-square-free word $w$, $h(w)$ avoids $k$-abelian-squares of period at least $p$. 
Let $h$ be the following morphism: %
$$h:  
\left\{
  \begin{array}{ll}
      0 \rightarrow & 00001101010\\ 
      1 \rightarrow & 00011111010\\ 
      2 \rightarrow & 00110100110\\ 
      3 \rightarrow & 00111001010.\\ 
    \end{array}
  \right.
$$
\begin{Theorem}\label{maintheo}
The morphism $h$ is (3,3)-abelian-square-free.
\end{Theorem}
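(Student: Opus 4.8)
The plan is to reduce the infinitary statement about $h(w)$ for an arbitrary abelian-square-free $w$ to a finite verification that a computer (or a careful hand analysis) can carry out. The morphism $h$ is $11$-uniform, so any factor of $h(w)$ of length at least, say, $22$ "sees" at least one full image block; I would set up bookkeeping for how a candidate $3$-abelian-square $UV$ with $|U|=|V|\ge 3$ can sit relative to the block structure. Write $h(w)=\cdots h(a_1)h(a_2)\cdots$ and suppose $UV$ is a factor of $h(w)$ that is a $3$-abelian-square. I would split into two regimes: the \emph{short} regime, where $|U|$ is bounded by a small constant (roughly $2\cdot 11=22$, the length beyond which a square must straddle several blocks in a controlled way), handled by a direct finite check over all factors of length up to that bound of $h$ applied to all abelian-square-free words of bounded length; and the \emph{long} regime $|U|\ge$ that constant.

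For the long regime the key is a \emph{synchronization}/\emph{rigidity} lemma for $h$: because $h$ is $11$-uniform and its four images are pairwise $3$-abelian-inequivalent in a strong sense, knowing $\Psi_3$ of a long factor together with its length mod $11$ and its first two letters essentially pins down which cut of the block decomposition it came from. Concretely, I would show that if $U\approx_{a,3}V$ and $|U|=|V|=\ell$ is large, then writing $U$ and $V$ in terms of the underlying letters, the $\Psi_3$-equality forces $\ell\equiv 0\pmod{11}$ and forces the preimage letter-blocks of $U$ and $V$ to be abelian-equivalent words $u,v$ over $\{0,1,2,3\}$ with $u\approx_a v$ and $uv$ a factor of $w$ — contradicting abelian-square-freeness of $w$. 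The equivalence characterization quoted in the Preliminaries ($k$-abelian equivalence $\iff$ $\Psi_k$ equal and $\operatorname{pref}_{k-1}$ equal) is exactly the tool that turns "$U\approx_{a,3}V$" into a statement about $\Psi_3$ plus a two-letter boundary condition, which is what makes the counting work: $\Psi_3(h(u))$ as a function of $u$ should be shown to determine $\Psi_a(u)$ (the number of each letter), so that $\Psi_3(h(u))=\Psi_3(h(v))$ with the boundary letters matching forces $u\approx_a v$.

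So the main steps, in order: (1) record the $11$-uniformity and compute, once and for all, the maps $a\mapsto \Psi_3(h(a))$ and the relevant length-$2$ prefixes/suffixes and overlap vectors $\Psi_3(\operatorname{suf}_i(h(a))\operatorname{pref}_j(h(b)))$ that govern the boundary contributions when a factor starts/ends inside a block; (2) prove the alignment lemma: a long $3$-abelian-square in $h(w)$ must have period divisible by $11$ and must be block-aligned up to an offset that the boundary data handles, reducing it to $u\approx_a v$ with $uv$ a factor of $w$; (3) invoke abelian-square-freeness of $w$ to finish the long case; (4) dispatch the finitely many short cases (period $3\le |U|\le C$ for the explicit constant $C$) by exhaustive check — here one only needs to look at factors of $h(x)$ for abelian-square-free words $x$ of length $O(1)$, since a square of bounded period lives in a bounded window.

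The hard part will be step (2), and within it the claim that $\Psi_3\circ h$ is injective enough on Parikh classes: I must verify that the four vectors $\Psi_3(h(0)),\ldots,\Psi_3(h(3))$ are affinely independent in the right way so that $\Psi_3(h(u))$ recovers $\Psi_a(u)$, and I must control the finitely many boundary correction terms so that a non-block-aligned occurrence cannot accidentally satisfy the $\Psi_3$ constraint. This is essentially a linear-algebra-plus-case-analysis argument over the explicit $11$-letter images, and it is where a computer-assisted check is most valuable; I expect the morphism $h$ was selected precisely so that this injectivity and the boundary analysis go through, so the proof is "verify the finite certificate" rather than "discover structure."
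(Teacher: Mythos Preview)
Your plan shares the paper's architecture: a short-regime finite check, then in the long regime use linear algebra on $\Psi_3$ to pull a putative $3$-abelian-square in $h(w)$ back to an abelian square in $w$. The paper likewise exploits that every $h(a)$ begins with $00$ and ends with $10$, builds the matrix $N$ with columns $\Psi_3(h(a)\,00)$, extracts an invertible $4\times 4$ submatrix $M$, and from $U\approx_{a,3}V$ derives $\Psi(x_2)-\Psi(x_1)=M^{-1}(\text{boundary data})$, just as you anticipate.

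The gap is in your step~(2). You aim to show that a long $3$-abelian-square must have period $\equiv 0\pmod{11}$ and be block-aligned, so that preimage words $u,v$ with $u\approx_a v$ arise directly. This alignment is \emph{not} what the paper proves, and attempting it would stall: non-aligned cut configurations are not excluded by the $\Psi_3$ and prefix/suffix constraints alone. Instead, the paper's key step is a finite verification (over the $4^3\cdot 11^3$ choices of boundary letters $a_1,a_2,a_3$ and cut positions) that whenever the necessary constraints hold one has
\[
M^{-1}\Psi_S(v_1,u_2,v_2,u_3)=\alpha_1\Psi(a_1)-(2\alpha_2-1)\Psi(a_2)-(1-\alpha_3)\Psi(a_3)
\]
for some $(\alpha_1,\alpha_2,\alpha_3)\in\{0,1\}^3$. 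Thus $\Psi(x_2)-\Psi(x_1)$ is always a prescribed $\{0,\pm1\}$-combination of the boundary letters' Parikh vectors, and setting $x'_i=a_i^{\alpha_i}x_i\,a_{i+1}^{1-\alpha_{i+1}}$ yields an abelian square $x'_1x'_2$ inside $w$. The boundary corrections do not \emph{eliminate} non-aligned occurrences; they are \emph{absorbed} by rounding each cut to an adjacent block boundary. Your sentence ``so that a non-block-aligned occurrence cannot accidentally satisfy the $\Psi_3$ constraint'' is therefore the wrong target: replace it with the rounding claim above (and its finite check), and the rest of your outline goes through.
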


\begin{proof}
\newmuskip{\medmuskipsave}
\medmuskipsave=\medmuskip
\setlength{\medmuskip}{0mu}

The proof is based on the same idea as the one used by Rao to give sufficient conditions for a morphism to be $k$-abelian-free \cite{abeliancube} which is a generalization of the sufficient conditions given by Carpi for abelian-free-morphisms \cite{carpi1}.
We checked the sufficient conditions on $h$ by computer.

In the proof we use the following property:
for every $k,i \in \mathbb{N}$ and $u,v\in \Sigma^*$ such that $i < k$, $k-1-i \leq |u| $ and $i \leq |v|$:
\begin{equation}
\Psi_k(uv)= \Psi_k(u \operatorname{pref}_i(v))+ \Psi_k(\operatorname{suf}_{k-1-i}(u)v) \label{split}\tag{S}.
\end{equation}

Let $A=\{0,1,2,3\}$ and $w \in A^*$. Let us show that if $h(w)$ contains a 3-abelian-square of period at least 3, then $w$ is not abelian-square-free.
Note that every image $h(x)$, $x\in A$, starts with the prefix $p=00$ and ends with the suffix $s=10$.

We check using a computer that for every $a,b \in A$, $ a\not=b$, every $3$-abelian-square of $h(ab)$ has period at most $2$. 
So if there is a forbidden 3-abelian-square, it has to be on the image of at least 3 letters. 
Then there are $a_1, a_2,a_3\in A$, $x_1,x_2 \in A^*$ and $(u_1, v_1), (u_2, v_2), (u_3, v_3) \in  (\{0,1\}^*, \{0,1\}^+)$  such that:
\begin{itemize}
\item $a_1x_1a_2x_2a_3$ is a factor of $w$,
\item for every $i \in \{1,2,3\}$, $u_iv_i =h(a_i)$,
\item $v_1h(x_1)u_2  \approx_{a,3} v_2h(x_2)u_3$.
\end{itemize}

Since $|v_1u_2v_2u_3|\ge 1+\vert h(a_2)\vert \geq 12$, either $|v_1u_2|\geq6$ or $|v_2u_3|\geq6$. 
Moreover, $|v_1h(x_1)u_2|= |v_2h(x_2)u_3|$, thus for all $i\in\{1,2\}$, $|v_ih(x_i)u_{i+1}| \geq 6$, and 
for all $i\in\{1,2\}$, $|v_i|\geq2$, $|h(x_i)|\geq2$ or $|u_{i+1}|\geq2$.
If $|u_{i+1}|\geq 2$, then:
\begin{align*}
\Psi_3(v_ih(x_i)u_{i+1}) \hspace{-2cm}\\
&=\Psi_3(v_i00) + \Psi_3(h(x_i)u_{i+1})\ \text{ (using  (\ref{split}) and $\operatorname{pref}_2(h(x_i)u_{i+1})=00$)}\\
&=  \Psi_3(v_i00) + \Psi_3(h(x_i)00) +  \Psi_3(u_{i+1})\ \text{ (using  (\ref{split}) and $\operatorname{pref}_2(u_{i+1})=00$)}\\
&= \Psi_3(v_i00) + \Psi_3(h(x_i)00) +  \Psi_3(10u_{i+1}) -  \Psi_3(1000).
\end{align*}
If $|v_{i}|\geq 2$ or $|h(x_i)|\geq 2$, we have the same result. So for every $i \in \{1,2\}$, we get:
\begin{equation}
 \Psi_3(v_ih(x_i)u_{i+1})= \Psi_3(v_i00) + \Psi_3(h(x_i)00) +  \Psi_3(10u_{i+1}) -  \Psi_3(1000) \label{formulaboutcut}\tag{L}.
\end{equation}

Let $N$ be the matrix indexed by $\{0,1\}^3 \times \{0,1,2,3\}$ with $N[w,x] = |h(x)00|_w$. %
$${}^t N =
\begin{pmatrix}
   3 & 1 & 2 & 1 & 1 & 2 & 1 & 0\\
   2 & 1 & 1 & 1 & 1 & 1 & 1 & 3\\
   1 & 2 & 1 & 2 & 2 & 1 & 2 & 0\\
   1 & 2 & 2 & 1 & 2 & 1 & 1 & 1\\
\end{pmatrix}
$$

For every word $w$, $\Psi_3( h(w)00)=N \Psi(w)$, thus the equality (\ref{formulaboutcut}) can be rewritten: 
$$\Psi_3(v_ih(x_i)u_{i+1})= \Psi_3(v_i00) + N \Psi(x_i) +  \Psi_3(10u_{i+1}) -  \Psi_3(1000).$$

Using $v_1h(x_1)u_2  \approx_{a,3} v_2h(x_2)u_3$, we get the following:
\begin{equation}
  N (\Psi(x_2)-\Psi(x_1)) =  \Psi_3(v_100) + \Psi_3(10u_2) - \Psi_3(v_200) - \Psi_3(10u_3).  \label{isinImN}
\end{equation}

Let $M$ be the sub-matrix of $N$ made of its rows 1, 2, 3 and 4 (they correspond to the words $000,001,010,011$).
$$M = \begin{pmatrix}
   3 & 2 & 1 & 1\\
   1 & 1 & 2 & 2\\
   2 & 1 & 1 & 2\\
   1 & 1 & 2 & 1\\
\end{pmatrix}$$
Let $S=\{000,001,010,011\}$.
Then $\Psi_S(w)$ is the sub-vector of $\Psi_3(w)$ made of the rows 1, 2, 3 and 4. We can check that $M$ is non-singular. Thus: %
\begin{equation}
\Psi(x_2) - \Psi(x_1) = M^{-1}(\Psi_S(v_100) +  \Psi_S(10u_2)-  \Psi_S(v_200)-\Psi_S(10u_3)).\label{Minisint}
\end{equation}

Let: $$\Psi_S(v_1, u_2, v_2, u_3) = \Psi_S(v_100) +  \Psi_S(10u_2) - \Psi_S(v_200)-\Psi_S(10u_3),\text{ and}$$
$$\Psi_3(v_1, u_2, v_2, u_3) = \Psi_3(v_100) +  \Psi_3(10u_2) - \Psi_3(v_200)-\Psi_3(10u_3).$$

\sloppy From equation (\ref{isinImN}), $ \Psi_3(v_1, u_2, v_2, u_3)$ is in $\operatorname{Im}(N)$,
and from equation (\ref{Minisint}), $M^{-1}(\Psi_S(v_1, u_2, v_2, u_3))$ is an integer vector.
From $v_1h(x_1)u_2  \approx_{a,3} v_2h(x_2)u_3$, if we note $p=00$, $s=10$ and $k=3$, we get $\operatorname{pref}_{k-1}(v_1p) = \operatorname{pref}_{k-1}(v_2p)$, $\operatorname{suf}_{k-1}(su_2) = \operatorname{suf}_{k-1}(su_3)$.
The following claim is verified using a computer program. There are $4^3$ values for the $a_i$ and $11^3$ ways of choosing the $u_i, v_i$ for each of them which makes 85\,184 cases to check (most of them are eliminated by the prefix and suffix conditions).
\begin{Claim}\label{centralclaim}
For all $a_1, a_2, a_3\in A$ and $(u_1, v_1), (u_2, v_2), (u_3, v_3) \in  \left(\{0,1\}^*, \{0,1\}^+\right)$ such that:
\begin{itemize}
  \item $ \forall i \in \{1,2,3\}$, $u_iv_i =h(a_i)$, 
  \item $\operatorname{pref}_{k-1}(v_1p)=\operatorname{pref}_{k-1}(v_2p)$ and $\operatorname{suf}_{k-1}(su_2)=\operatorname{suf}_{k-1}(su_3)$,
  \item $\Psi_{k}(v_1, u_2, v_2, u_3) \in \operatorname{Im}(N)$ and $M^{-1}(\Psi_S(v_1, u_2, v_2, u_3))$ is an integer vector,
\end{itemize}
there are $(\alpha_1, \alpha_2, \alpha_3)\in \{0,1\}$ such that:
\begin{equation}
M^{-1}(\Psi_S(v_1, u_2, v_2, u_3)) = \alpha_1 \Psi(a_1 )- (2\alpha_2 -1)\Psi(a_2)- (1-\alpha_3) \Psi(a_3). \label{eqclaim}\tag{E}
\end{equation}
\end{Claim}

From the claim we have $(\alpha_1, \alpha_2, \alpha_3)\in \{0,1\}$ such that equation (\ref{eqclaim}) is fulfilled.
Now we can introduce $x'_1, x'_2$ such that: $\forall  i \in  \{1,2\}$, $x'_i = a_i^{\alpha_i} x_i a_{i+1}^{1-\alpha_{i+1}}$.
Then $x'_1x'_2$ is a factor of $w$ and we have the following.
\begin{align*}
\Psi(x'_2)-\Psi (x'_1) &= \Psi(x_2)+ (1-\alpha_3) \Psi(a_3)+\alpha_2 \Psi(a_2) -(\Psi (x_1) +(1-\alpha_2)\Psi(a_2) + \alpha_1 \Psi(a_1) )\\
 &=M^{-1}(\Psi_S(v_1, u_2, v_2, u_3))- \alpha_1 \Psi(a_1 )+ (2\alpha_2 -1)\Psi(a_2)+ (1-\alpha_3) \Psi(a_3)\\
 \Psi(x'_2)-\Psi (x'_1) &=0
\end{align*}
This implies that there is an abelian-square on $w$.
\setlength{\medmuskip}{\medmuskipsave}
\end{proof}
Using Theorem~\ref{maintheo} together with the existence of abelian-square-free words over four letters, we get the following corollary.
\begin{Corollary}\label{col:33}
There is an infinite binary word that avoids 3-abelian-squares of period at least 3.
\end{Corollary}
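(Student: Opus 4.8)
The plan is simply to feed an infinite abelian-square-free word over four letters through the morphism $h$ and to invoke Theorem~\ref{maintheo}. First I would recall that, by Ker\"anen's theorem~\cite{keranen1}, there exists an infinite word $\mathbf{w}\in A^{\omega}$, with $A=\{0,1,2,3\}$, all of whose factors are abelian-square-free; for definiteness one can take the fixed point of his $85$-uniform morphism. Set $\mathbf{u}=h(\mathbf{w})$. Since $h$ is $11$-uniform, hence nonerasing, and $\mathbf{w}$ is infinite, $\mathbf{u}$ is a well-defined infinite binary word.

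Next I would transfer the finitary conclusion of Theorem~\ref{maintheo} to $\mathbf{u}$. Assume towards a contradiction that $\mathbf{u}$ has a factor $z$ that is a $3$-abelian-square of period at least $3$. As $z$ is finite, $z$ occurs inside $h(\operatorname{pref}_n(\mathbf{w}))$ for some $n\in\mathbb{N}$. The prefix $\operatorname{pref}_n(\mathbf{w})$ is a factor of $\mathbf{w}$, hence abelian-square-free, so Theorem~\ref{maintheo} (which asserts that $h$ is $(3,3)$-abelian-square-free) guarantees that $h(\operatorname{pref}_n(\mathbf{w}))$ avoids $3$-abelian-squares of period at least $3$. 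This contradicts the presence of $z$, and therefore $\mathbf{u}$ avoids every $3$-abelian-square of period at least $3$, which is exactly the assertion of the corollary.

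There is no genuine obstacle here: the entire weight of the corollary rests on Theorem~\ref{maintheo}. The only two points worth spelling out are that one must use an \emph{infinite} abelian-square-free word over four letters — which is precisely what Ker\"anen's construction supplies, as opposed to the merely bounded-length abelian-square-free \emph{ternary} words mentioned in the introduction — and the elementary remark that any repetition occurring in an infinite word already occurs within the image of some finite prefix, so that the morphism property, stated for finite abelian-square-free words, is enough to conclude.
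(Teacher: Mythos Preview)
Your proof is correct and follows exactly the same route as the paper: combine Theorem~\ref{maintheo} with Ker\"anen's infinite abelian-square-free word over four letters. You have simply spelled out the passage from the finitary statement of Theorem~\ref{maintheo} to the infinite word, which the paper leaves implicit.
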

Moreover, we can deduce the exponential growth of such words from the exponential growth of abelian-square-free words over four letters~\cite{carpi2}.
Corollary~\ref{col:33} gives a partial answer to Question \ref{kabeliansquarequestion}: there is such a $k$, and $k \in \{2,3\}$.  
We can then ask the following question. 
\begin{Problem}\label{question2abeliansquare}
Can we avoid $2$-abelian-squares of period at least $p$ on the binary alphabet,
for some $p\in \mathbb{N}$ ? 
\end{Problem}
Computer experiments show that we can avoid those patterns for $p=3$ in a word of length 15\,000.

\subsection*{2-abelian squares over a ternary alphabet}
Rao showed that one can build an infinite word that avoids 3-abelian-squares over a ternary alphabet \cite{abeliancube}. 
The longest 2-abelian-square-free ternary word has a length of 537 \cite{Karhumaki1}.
M\"akel\"a asked whether we can avoid abelian-squares of period at least 2 in ternary words (Question \ref{makquestsquares}). 
We give the answer to a weaker version of this question, that is one can avoid 2-abelian-squares of period at least 2 over the ternary alphabet.
Let:
$$h_2:  
\left\{
  \begin{array}{ll}
      0 \rightarrow & 00021\\ 
      1 \rightarrow & 00111\\ 
      2 \rightarrow & 01121\\ 
      3 \rightarrow & 01221.\\ 
    \end{array}
  \right.
$$
\begin{Theorem}\label{2abeliansquare}
$h_2$ is $(2,2)$-abelian-square-free.
\end{Theorem}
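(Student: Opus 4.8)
The plan is to mirror the structure of the proof of Theorem~\ref{maintheo}, replacing the parameters $(k,p)=(3,3)$ with $(k,p)=(2,2)$ and the morphism $h$ with $h_2$, which is a $5$-uniform morphism over the alphabet $A=\{0,1,2,3\}$ whose images all start with the prefix $p=0$ and end with the suffix $s=1$. First I would verify by a finite computation that for every pair of distinct letters $a,b\in A$, every $2$-abelian-square occurring in $h_2(ab)$ has period at most $1$; this forces any forbidden $2$-abelian-square (period $\ge 2$) to straddle the images of at least three letters. Hence there exist $a_1,a_2,a_3\in A$, words $x_1,x_2\in A^*$, and splittings $(u_i,v_i)\in(\{0,1,2\}^*,\{0,1,2\}^+)$ with $u_iv_i=h_2(a_i)$ such that $a_1x_1a_2x_2a_3$ is a factor of $w$ and $v_1h_2(x_1)u_2\approx_{a,2}v_2h_2(x_2)u_3$.

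Next I would establish the analogue of formula~(\ref{formulaboutcut}). Since $|v_1u_2v_2u_3|\ge 1+|h_2(a_2)|\ge 6$, one of $|v_1u_2|,|v_2u_3|$ is $\ge 3$, and since the two sides have equal length, for each $i\in\{1,2\}$ we have $|v_ih_2(x_i)u_{i+1}|\ge 3$, so at least one of $|v_i|,|h_2(x_i)|,|u_{i+1}|$ is $\ge 1$. Applying the splitting identity~(\ref{split}) with $k=2$ and $i\in\{0,1\}$ (using $\operatorname{pref}_1(h_2(x_i)u_{i+1})=0$ and $\operatorname{pref}_1(u_{i+1})=0$, or symmetrically the suffix $1$), I get for each $i\in\{1,2\}$ the decomposition $\Psi_2(v_ih_2(x_i)u_{i+1})=\Psi_2(v_i0)+\Psi_2(h_2(x_i)0)+\Psi_2(1u_{i+1})-\Psi_2(10)$. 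Defining $N$ to be the matrix indexed by $\{0,1,2\}^2\times A$ with $N[w,x]=|h_2(x)0|_w$, we have $\Psi_2(h_2(x_i)0)=N\Psi(x_i)$, and the $2$-abelian equivalence yields $N(\Psi(x_2)-\Psi(x_1))=\Psi_2(v_1,u_2,v_2,u_3)$ where $\Psi_2(v_1,u_2,v_2,u_3):=\Psi_2(v_10)+\Psi_2(1u_2)-\Psi_2(v_20)-\Psi_2(1u_3)$. I would then pick a $4\times 4$ non-singular submatrix $M$ of $N$ corresponding to a four-element set $S\subset\{0,1,2\}^2$ (to be found by the computer), so that $\Psi(x_2)-\Psi(x_1)=M^{-1}\Psi_S(v_1,u_2,v_2,u_3)$ is forced to be an integer vector, and $\Psi_2(v_1,u_2,v_2,u_3)\in\operatorname{Im}(N)$. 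The $2$-abelian equivalence also gives the boundary conditions $\operatorname{pref}_1(v_1p)=\operatorname{pref}_1(v_2p)$ and $\operatorname{suf}_1(su_2)=\operatorname{suf}_1(su_3)$.

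Finally, I would state and check by computer the analogue of Claim~\ref{centralclaim}: ranging over the $4^3$ choices of $(a_1,a_2,a_3)$ and the $5^3$ choices of splittings $(u_i,v_i)$ (so $4^3\cdot 5^3=8000$ cases, heavily pruned by the prefix/suffix and integrality/image constraints), whenever all the above conditions hold there exist $\alpha_1,\alpha_2,\alpha_3\in\{0,1\}$ with $M^{-1}\Psi_S(v_1,u_2,v_2,u_3)=\alpha_1\Psi(a_1)-(2\alpha_2-1)\Psi(a_2)-(1-\alpha_3)\Psi(a_3)$. Setting $x'_i=a_i^{\alpha_i}x_ia_{i+1}^{1-\alpha_{i+1}}$ for $i\in\{1,2\}$, the same telescoping computation as in the proof of Theorem~\ref{maintheo} gives $\Psi(x'_2)-\Psi(x'_1)=0$, so $x'_1x'_2$ is an abelian-square factor of $w$; hence if $w$ is abelian-square-free, $h_2(w)$ avoids $2$-abelian-squares of period at least $2$, i.e. $h_2$ is $(2,2)$-abelian-square-free.

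The main obstacle is not the structural argument, which transfers essentially verbatim, but ensuring the finite verifications actually succeed: that $M$ as chosen is non-singular, that the base case (squares in $h_2(ab)$ have period $\le 1$) holds, and above all that the central claim is true for $h_2$ — i.e. that the morphism was designed so that every surviving case yields a valid $(\alpha_1,\alpha_2,\alpha_3)\in\{0,1\}^3$. Unlike the $k=3$ case, here $k-1=1$, so the prefix/suffix conditions are single-letter constraints and prune less aggressively, and the linear-algebraic rigidity coming from $N$ (a $9\times 4$ matrix of row sums $5$) is what must do the work; verifying this is purely computational but is the crux of the argument.
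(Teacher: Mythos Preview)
Your proposal is correct and follows essentially the same approach as the paper. The paper's own proof is even terser: it simply states that the sufficient conditions of Theorem~\ref{maintheo} are checked for $h_2$ with the parameters $k=2$, $p=0$, $s=1$, and specifies the index set $S=\{00,01,02,11\}$ for the invertible submatrix $M$, which is precisely the set your computer search would find.
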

\begin{proof}
The proof is also done by checking sufficient conditions, similar to those in the proof of the Theorem \ref{maintheo}.
The Claim \ref{centralclaim} is true for this morphism with $k=2$, $S=\{00,01,02,11 \}$, $s=1$ and $p=0$.
\end{proof}

If $w$ is an infinite square-free-abelian word over four letters, $h_2(w)$ is a ternary word which avoid 2-abelian-squares of period at least 2.

\section{Minimal number of distinct 3-abelian-squares in infinite binary words}\label{sec:howmanykab}
Fraenkel and Simpson showed that there is an infinite binary word containing only the squares $0^2$, $1^2$, $(01)^2$ \cite{Fraenkel94howmany}. Moreover, every binary infinite word contains at least three distinct squares. 
It is natural to ask whether this property can be extended to the $k$-abelian case: is there a $k\in \mathbb{N}$ such that there is an infinite binary word that contains only 3 distinct $k$-abelian-squares?

More generally let $g(k)$ be the minimal number of distinct $k$-abelian squares that an infinite binary word must contain. 
Any $(k+1)$-abelian-square is a $k$-abelian-square so $g$ is non-increasing.
From Fraenkel and Simpson's result, we know that $g(k) \geq 3$ for all $k$.

\begin{table}
\footnotesize
$$h_3:
\left\{
  \begin{array}{ll}
    0 \rightarrow 
    &u1001011000101110001100101100010111001011000111001011100011001011000\\
    &10111001011001110001011000111001011100011001011000101110010110001110\\
    &01011100011001011000111001011001110001100101100011100101110001100101\\
    &10001011100101100011100101110001100101100011100101100111000101100011\\
    &10010110001011100101100111000101110010110001011100011001011000111001\\
    &0110011100010111001011000111001011100011v\\
    1 \rightarrow 
    &u0001110010110011100011001011000111001011001110001011100101100011100\\
    &10111000110010110001110010110011100011001011000111001011100010110001\\
    &11001011001110001100101100011100101100111000101100011100101100010111\\
    &00011001011000111001011001110001100101100011100101110001100101100010\\
    &11100101100011100101110001100101100011100101100111000110010110001110\\
    &0101110001100101100010111001011001110v\\
    2 \rightarrow
    &u0001110010110011100011001011000111001011001110001011100101100011100\\
    &10111000110010110001011100101100111000101100011100101100010111001011\\
    &00111000101110010110001110010111000110010110001011100101100111000101\\
    &11001011000101110001100101100010111001011001110001011000111001011001\\
    &11000110010110001110010111000110010110001011100101100111000101110010\\
    &110001011100011001011000101110010110011100011v\\
    3 \rightarrow
    &u0001110010110001011100011001011000101110010110001110010111000110010\\
    &11000111001011001110001100101100011100101110001011000111001011001110\\
    &00110010110001110010110011100010110001110010110001011100011001011000\\
    &10111001011001110001011000111001011100010110011100011001011000111001\\
    &01100111000101100011100101100010111001011001110001011100101100011100\\
    &101110001100101100010111001011001110v\\
    \end{array}
  \right.
$$
Where:\hfill~
$$
\begin{array}{ll}
u=&11000110010110001011100101100111000101100011100101100\\
  &01011100101100111000101110010110001011100011001011000\\
  &111001011001110001100101100010111001011001110001011\\
v=&00101100011100101100111000110010111000101100111000101\\
  &11001011000101110001100101110001011001110001100101100\\
  &01110010111000101100011100101100111000101100011100101\\
\end{array}
$$

\caption{A $(5,3)$-abelian-square-free morphism, with only three distinct $5$-abelian-squares: $00$, $11$ and $0101$.}
\label{tb:h3}
\end{table}

\begin{Proposition}
The morphism $h_3$ (defined in Table~\ref{tb:h3}) is $(5,3)$-abelian-square-free. Moreover, for every abelian-square-free word $w$, $h_3(w)$ contains only 3 distinct $5$-abelian-squares: $0^2$, $1^2$ and $(01)^2$.
\end{Proposition}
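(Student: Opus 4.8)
The plan is to follow exactly the template of the proof of Theorem~\ref{maintheo}, since the claim that $h_3$ is $(5,3)$-abelian-square-free is of the same kind: we must show that if $h_3(w)$ contains a $3$-abelian-square of period at least $3$ (here we will actually use $k=5$ in the $k$-abelian bookkeeping, but the morphism is designed so that periods $1$ and $2$ give only the allowed squares, and periods between $3$ and some small bound internal to $h_3(ab)$ are checked directly by computer), then $w$ is not abelian-square-free. Concretely, I would first verify by computer that for every $a,b\in A$ with $a\neq b$ every $5$-abelian-square occurring in $h_3(ab)$ has period at most $2$; combined with a check of the short periods this forces any forbidden square to straddle the images of at least three letters. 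Then, writing the square as $v_1 h_3(x_1) u_2 \approx_{a,5} v_2 h_3(x_2) u_3$ with $u_i v_i = h_3(a_i)$ and $a_1 x_1 a_2 x_2 a_3$ a factor of $w$, I would use the splitting identity~(\ref{split}) exactly as in~(\ref{formulaboutcut}) — now with $k=5$, $p=\operatorname{pref}_4(h_3(x))$ and $s=\operatorname{suf}_4(h_3(x))$, which are constant because every image of $h_3$ starts with the common prefix $u$ and ends with the common suffix $v$ — to express $\Psi_5(v_i h_3(x_i) u_{i+1})$ as $\Psi_5(v_i p) + N\Psi(x_i) + \Psi_5(s u_{i+1}) - \Psi_5(s p)$, where $N$ is the matrix with $N[w,x]=|h_3(x)p|_w$. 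From $5$-abelian equivalence one extracts that $N(\Psi(x_2)-\Psi(x_1))$ equals an explicit vector depending only on $v_1,u_2,v_2,u_3$, that this vector lies in $\operatorname{Im}(N)$, that $M^{-1}$ applied to the relevant sub-vector is integral for a suitable non-singular sub-matrix $M$ of $N$, and that the $(k-1)$-prefix/suffix compatibility conditions hold; then the analogue of Claim~\ref{centralclaim}, checked by computer over the finitely many choices of $a_i$ and $(u_i,v_i)$, yields $\alpha_1,\alpha_2,\alpha_3\in\{0,1\}$ satisfying~(\ref{eqclaim}), and setting $x'_i=a_i^{\alpha_i} x_i a_{i+1}^{1-\alpha_{i+1}}$ gives $\Psi(x'_1)=\Psi(x'_2)$ with $x'_1 x'_2$ a factor of $w$, i.e.\ an abelian-square in $w$. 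This establishes the $(5,3)$-abelian-square-free part.

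For the second assertion — that the only $5$-abelian-squares in $h_3(w)$ are $0^2$, $1^2$ and $(01)^2$ — I would argue that by the first part any $5$-abelian-square in $h_3(w)$ has period $1$ or $2$ (periods at least $3$ are excluded), and a period-$1$ or period-$2$ $5$-abelian-square of length at most, say, $8$ is literally a genuine square of period $1$ or $2$ since words of length at most $2k-1=9$ that are $5$-abelian equivalent are equal. So it suffices to check that the set of period-$\le 2$ squares occurring in $h_3(w)$ is exactly $\{0^2,1^2,(01)^2\}$: I would enumerate by computer all squares of period $1$ and $2$ appearing in the words $h_3(ab)$ for $a,b\in A$, $a\neq b$ (together with $h_3(aba)$ if needed to catch squares spanning three images, though periods $1,2$ are short enough that $h_3(ab)$ already suffices given the block lengths), and verify none other than the three claimed ones appear; because every factor of $h_3(w)$ of bounded length is a factor of some $h_3(ab)$ with $ab$ a factor of an abelian-square-free word, this is a finite check. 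One must also confirm $10$ never extends to $1010$ and $(10)^2$ is absorbed into the $(01)^2$ count, i.e.\ no occurrence of $1010$ sits so that it is not already an occurrence of $0101$ shifted; this is part of the same enumeration.

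The main obstacle is purely computational rather than conceptual: the images of $h_3$ are extremely long (each on the order of $300$–$500$ letters, with $u$ and $v$ themselves of length about $150$), so verifying the analogue of Claim~\ref{centralclaim} requires checking $4^3\cdot |h_3(a_1)|\cdot|h_3(a_2)|\cdot|h_3(a_3)|$-many candidate tuples $(a_i,u_i,v_i)$, and for each the prefix/suffix and integrality filters must be applied before the linear-algebra identity~(\ref{eqclaim}) is tested; ensuring the sub-matrix $M$ chosen (four rows of $N$ corresponding to four length-$5$ words over $\{0,1\}$) is actually non-singular, and that the bound on internal periods used to reduce to ``three images'' is correct for these specific block lengths, are the delicate points. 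As in Theorem~\ref{maintheo}, I would simply state that these sufficient conditions have been verified by computer, citing the method of Rao~\cite{abeliancube} and Carpi~\cite{carpi1}, and defer the description of the search to Section~\ref{sec:concl}.
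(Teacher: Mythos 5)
Your overall strategy is the same as the paper's: the $(5,3)$-abelian-square-freeness of $h_3$ is established by the same computer-checked sufficient conditions as in Theorem~\ref{maintheo} (splitting identity, matrix $N$, non-singular sub-matrix $M$, the analogue of Claim~\ref{centralclaim}, and the construction of $x'_1x'_2$), and the statement about which short squares survive is a finite check on images of short words. So the linear-algebra core of your proposal matches the paper.

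There is, however, one genuine gap in your reduction step. You check only the images $h_3(ab)$ of two-letter words and conclude that any forbidden square ``straddles the images of at least three letters,'' and from this you immediately write the square as $v_1h_3(x_1)u_2 \approx_{a,5} v_2h_3(x_2)u_3$ with the start in $h_3(a_1)$, the midpoint in $h_3(a_2)$ and the end in $h_3(a_3)$ for three \emph{distinct} letter occurrences. In Theorem~\ref{maintheo} that implication is valid because $h$ is uniform: if one half of the square were contained in a single image, both halves would have length at most $11$, so the whole square would lie in the image of two consecutive letters, contradicting the three-image straddling. For $h_3$ this argument breaks down, because the images have different lengths: a square can straddle three images while one of its halves is still contained in a single (longer) image, in which case the decomposition your linear-algebra argument needs does not exist. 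This is exactly the point the paper flags: since $h_3$ is not uniform, one must additionally check the images of words of length up to $3$ to ensure that a long $5$-abelian-square really starts and ends on images of different letters (equivalently, that these degenerate configurations, which are factors of some $h_3(abc)$, do not occur). Your proposal's parenthetical doubts about ``the bound on internal periods for these specific block lengths'' gesture at this, but the concrete plan as written (only the $h_3(ab)$ check) would leave the reduction unproved; adding the finite check on $h_3(abc)$ closes the gap and brings your argument in line with the paper's proof.
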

\begin{proof}
The proof that $h_3$ is $(5,3)$-abelian-square-free is also done by a computer check, similar to the proof of the Theorem \ref{maintheo}. 
This morphism is not uniform, so we need to check images of words of size up to 3 to ensure that
the image of a long 5-abelian square starts and ends on images of different letters.
\end{proof}
This proposition together with Ker\"anen's word tells us that for any $k \geq 5$ there is an infinite binary word with only 3 distinct $k$-abelian-squares, \emph{i.e.} $g(k) =3$ for every $k\ge 5$.
Propositions \ref{g34lower} and \ref{g34higher} give us that $g(3)=g(4)=4$.

\begin{Proposition}\label{g34lower}
Every word of size more than 87 over the binary alphabet contains at least 4 distinct 4-abelian-squares.
\end{Proposition}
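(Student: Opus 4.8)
The plan is to establish Proposition~\ref{g34lower} by an exhaustive computer search over the tree of binary words, pruned by the constraint that at most three distinct 4-abelian-squares may occur. First I would fix a candidate set $T$ of three 4-abelian-equivalence classes of squares and attempt to build long binary words all of whose square factors (as 4-abelian-squares) lie in $T$. Since 4-abelian equivalence of words of length at most $2k-1=7$ coincides with equality, and longer squares are controlled by $\Psi_4$ together with a length-3 prefix, the ``distinct 4-abelian-square'' relation is decidable on the fly as the word is extended letter by letter. For each fixed triple $T$ one runs a depth-first search, backtracking whenever a prefix already forces a fourth distinct 4-abelian-square, and records the depth reached; the claim is that in every case the search tree is finite, with deepest branch of length $87$.

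The key reduction that makes this feasible is Lemma~\ref{Lyndonprop}: the language $L_T$ of binary words whose 4-abelian-squares all lie in $T$ is factorial, and it contains no arbitrarily large ordinary powers (a large power $w^n$ contains, for $n$ large, many pairwise non-4-abelian-equivalent squares unless $w$ is very short, and the short cases can be ruled out directly by checking that $0^m$, $1^m$, $(01)^m$ already contribute the maximum one can afford). Hence by Lemma~\ref{Lyndonprop} it suffices to verify that $L_T$ contains only finitely many Lyndon words (and prefixes of Lyndon words), which shrinks the search tree dramatically, exactly as in Figure~\ref{tree} and in the proof of Proposition~\ref{nocube}. So the steps in order are: (i) enumerate the finitely many triples $T$ of 4-abelian-square classes that could conceivably be realized — by symmetry under exchanging $0\leftrightarrow 1$ and under reversal one may assume $T$ contains $0^2$ and $1^2$ and a third small square such as $(01)^2$ or the like; (ii) for each such $T$, run the Lyndon-prefix-restricted DFS; (iii) confirm every branch terminates, and extract the global maximum depth, which is $87$; (iv) conclude that any binary word of length $>87$ must contain a fourth distinct 4-abelian-square regardless of which three it already contains.

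I would expect the main obstacle to be bookkeeping rather than conceptual: one must be careful that ``4-abelian-square'' is tested for \emph{all} factors of the current prefix that happen to be squares in the 4-abelian sense, not merely suffixes, and that the equivalence test is implemented correctly across the boundary between the short regime ($|u_1|\le 3$, where one needs literal equality of the two halves after accounting for the length-3 prefix condition) and the Parikh regime. A secondary subtlety is justifying that it genuinely suffices to consider Lyndon words: the argument is that if $L_T$ had arbitrarily long words it would, by Lemma~\ref{Lyndonprop}, have arbitrarily long Lyndon words or arbitrarily large powers, and the latter is excluded because a word of the form $z^n$ with $|z|\ge 2$ and $n$ large contains squares $z^2, (z^2\text{ shifted})$, etc., that are not all 4-abelian equivalent once $n$ is large enough — a short finite check — while $|z|=1$ gives only $0^2$ or $1^2$. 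Once that is in place, the finiteness of the Lyndon search (reported by the program, analogously to the $2\,732\,711\,352$ Lyndon words in Proposition~\ref{nocube}) together with the longest branch having length $87$ gives the bound. The only genuinely external input is trusting the computation, which is of the same nature as the other computer-assisted claims in the paper.
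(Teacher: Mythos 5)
Your plan is, at bottom, the same kind of verification the paper uses (its entire proof is ``verified by an exhaustive computer search''), but the specific way you propose to organize the search has a genuine gap: you cannot obtain the precise threshold $87$ from a search restricted to prefixes of Lyndon words. Lemma~\ref{Lyndonprop} (together with the observation that large powers force many distinct squares) only proves \emph{finiteness} of the factorial language of binary words with at most three distinct 4-abelian-squares; it says nothing about where the longest word of that language sits relative to its longest Lyndon word. A longest such word is in general \emph{not} a prefix of a Lyndon word of the language --- this is visible already in Figure~\ref{tree}, where the unrestricted tree reaches depth $10$ while the Lyndon-restricted tree stops at depth $9$, and it is why Proposition~\ref{nocube} (a pure non-existence statement) can use the Lyndon trick while the present proposition cannot. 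The only explicit bound you could extract from the lemma's factorization $w=w_1^{\alpha_1}\cdots w_n^{\alpha_n}$ with all $\alpha_i$ bounded is enormously weaker than $87$. So your step (iii), which reads off the deepest branch of the Lyndon-restricted DFS and declares it to be $87$, does not justify the statement ``every word of length more than $87$ contains a fourth distinct 4-abelian-square''; you must run the \emph{unrestricted} exhaustive search (which is entirely feasible here, since every branch dies quickly once four distinct squares are forced), or supply an extra argument recovering the true maximum length from the Lyndon data.

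A secondary problem is your step (i). The set of candidate triples $T$ of 4-abelian-square classes is not finite a priori: nothing in the statement bounds the length of the squares a word with only three of them may contain, and your symmetry reduction (``one may assume $T$ contains $0^2$ and $1^2$'') is itself an unproved claim. The clean way --- and what the exhaustive check really is --- is not to fix $T$ at all: do a single DFS over binary words, maintain the set of distinct 4-abelian-squares occurring as factors of the current prefix (your bookkeeping remarks about testing all factors, and about the short regime $|u_1|\le 3$ versus the $\Psi_4$-plus-prefix criterion, are correct), and backtrack as soon as that set has size $4$. Note also that ``distinct'' here counts distinct factors as words, not 4-abelian-equivalence classes of squares, so constraining membership in three fixed classes is not even the right pruning condition. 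With those two corrections --- no pre-enumeration of $T$, and no Lyndon restriction --- your search coincides with the paper's verification.
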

This was verified by an exhaustive computer search. 

\begin{Proposition}\label{g34higher}
Let:
$$h_4: \left\{
  \begin{array}{ll}
      0 \rightarrow & 0001100101001101011000101010001011101011000101\\ 
      1 \rightarrow & 0001100101001101011001110101011100011101011000101\\ 
      2 \rightarrow & 0001100101001110001010001100101100011101011000101\\ 
      3 \rightarrow & 000110010100111001010100111000101100101011000101.\\ 
    \end{array}
  \right.
$$
Then $h_4$ is $(3,3)$-abelian-square-free. Moreover, for every abelian-square-free word $w$, $h_4(w)$ contains only 4 distinct $3$-abelian-squares: $0^2$, $1^2$, $(01)^2$ and $(10)^2$.
\end{Proposition}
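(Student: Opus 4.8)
The plan is to follow exactly the template established in the proof of Theorem~\ref{maintheo}, since Proposition~\ref{g34higher} makes two claims about the same morphism $h_4$: first that $h_4$ is $(3,3)$-abelian-square-free, and second that for abelian-square-free $w$, the word $h_4(w)$ contains only the four $3$-abelian-squares $0^2$, $1^2$, $(01)^2$, $(10)^2$. For the first claim I would set up the same machinery: observe that every image $h_4(x)$ starts with a common prefix $p$ and ends with a common suffix $s$ (here one reads off $p=000$ and $s=101$, so $|p|=|s|=3\ge k-1=2$, which is what makes the split formula~(\ref{split}) and the derived formula~(\ref{formulaboutcut}) applicable); then check by computer that for $a\neq b$ in $A$, every $3$-abelian-square inside $h_4(ab)$ has period at most $2$, so that any long forbidden square must straddle images of at least three letters. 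One then introduces the matrices $N$ (indexed by $\{0,1\}^3\times A$ with $N[w,x]=|h_4(x)ps|_w$ — note the suffix $s$ is now length $3$ rather than $2$, so the bookkeeping constant changes from $\Psi_3(1000)$ to the appropriate $\Psi_3(s\,p\cdots)$ term), extracts a non-singular $4\times 4$ submatrix $M$ on a well-chosen index set $S\subseteq\{0,1\}^3$, and verifies the analogue of Claim~\ref{centralclaim} by exhausting the $4^3\cdot 49^3$ choices of $(a_i)$ and $(u_i,v_i)$ (most killed by the prefix/suffix constraints). As in Theorem~\ref{maintheo}, finding $(\alpha_1,\alpha_2,\alpha_3)\in\{0,1\}^3$ satisfying~(\ref{eqclaim}) lets one build $x'_1,x'_2$ with $\Psi(x'_1)=\Psi(x'_2)$, contradicting abelian-square-freeness of $w$.

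For the second claim I would argue that any $3$-abelian-square in $h_4(w)$ of period at least $3$ is already excluded by the first part, so it remains to enumerate the short $3$-abelian-squares (period $1$ or $2$). A period-$1$ square is just $xx$ for a letter $x$, giving only $00$ and $11$ as candidates (both do occur). A period-$2$ square is a $3$-abelian-square of period $2$, which since $2<2k-1=5$ for $k=3$... actually $2\le 2k-1$, so one must be slightly careful; but length-$2$ words that are $3$-abelian-equivalent are equal, and $uv$ with $|u|=|v|=2$ and $u\approx_{a,3}v$ forces $u=v$, so a period-$2$ $3$-abelian-square is an ordinary square $uu$ with $|u|=2$, i.e. one of $0000,0101,1010,1111$; among these $0000$ and $1111$ reduce to the period-$1$ cases, leaving $(01)^2$ and $(10)^2$. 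So the only possible short $3$-abelian-squares are $0^2,1^2,(01)^2,(10)^2$, and the statement then reduces to checking that each of these four actually appears in some $h_4(w)$ (they do, e.g. in the images of a few letters) and, conversely, that no other $3$-abelian-square survives — which is subsumed in the first-part computation together with this short-period classification. This step I would phrase as: a computer check verifies that the factors of $h_4(ab)$ for all $a,b\in A$ contain no $3$-abelian-square other than those four, which combined with the $(3,3)$-abelian-square-freeness established above covers all cases.

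The main obstacle is, as in Theorem~\ref{maintheo}, not conceptual but the correct formulation and certified execution of the finite case analysis: one must pick the submatrix index set $S$ so that $M$ is invertible \emph{and} so that $M^{-1}\Psi_S(v_1,u_2,v_2,u_3)$ being an integer vector is a strong enough constraint to make Claim~\ref{centralclaim} come out true for $h_4$ (a poor choice of $S$ can leave the claim false and force one to enlarge the system), and one must get the length-of-prefix/suffix arithmetic right since $h_4$ has a length-$3$ common prefix and suffix rather than length $2$. The rest — deriving~(\ref{formulaboutcut}) from~(\ref{split}), translating $\approx_{a,3}$ into the $\Psi_3$-plus-boundary conditions, and assembling $x'_1x'_2$ — is a line-by-line transcription of the earlier proof. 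I would therefore present the proof as: ``The argument is identical to that of Theorem~\ref{maintheo}, with $k=3$, common prefix $p=000$, common suffix $s=101$, and a suitable non-singular $4\times4$ submatrix $M$ of $N$; Claim~\ref{centralclaim} for $h_4$ and the absence of other short $3$-abelian-squares in the images $h_4(ab)$ are verified by computer.''
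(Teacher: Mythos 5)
Your proposal follows essentially the same route as the paper: the published proof is exactly a computer check patterned on Theorem~\ref{maintheo} for the $(3,3)$-abelian-square-freeness, plus the observation that $(00)^2$ and $(11)^2$ do not occur as factors of any image of a two-letter word, which together with your short-period classification yields the list $0^2$, $1^2$, $(01)^2$, $(10)^2$. One small caution: $0000$ and $1111$ do not ``reduce to the period-$1$ cases'' --- each would be a \emph{fifth} distinct $3$-abelian-square, namely $(00)^2$ or $(11)^2$ --- but your final formulation (check that $h_4(ab)$ contains no $3$-abelian-square other than the four listed) is precisely the verification the paper performs, so this slip does not affect the argument.
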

The proof that $h_4$ is $(3,3)$-abelian-square-free is also done by a computer check, similar to the proof of the Theorem \ref{maintheo}. 
One can then check that $(00)^2$ and $(11)^2$ do not appear as factors of any image of a two-letter word. 

Finally using again an exhaustive search we were able to give the lower bound $g(2)\ge 5$.
\begin{Proposition}\label{g2lower}
Every word of size more than 92 over the binary alphabet contains at least 5 distinct 2-abelian-squares.
\end{Proposition}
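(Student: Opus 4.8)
The plan is to prove the contrapositive: every binary word containing at most $4$ distinct $2$-abelian-squares (that is, at most $4$ distinct even-length factors $xy$ with $|x|=|y|$ and $x\approx_{a,2}y$, the convention used in Table~\ref{tb:h3}) has length at most $92$. Write $L$ for the set of all such binary words. First I would note that $L$ is a factorial language, since the set of $2$-abelian-square factors of a factor of $w$ is contained in that of $w$. Next I would observe that $L$ contains no arbitrarily large powers: for any nonempty $u$ and any $j\ge 1$ the factor $u^{2^{j}}$ of a high power of $u$ is a $2$-abelian-square, because its two halves are both equal to $u^{2^{j-1}}$; hence $u^{32}$ already contains the five pairwise distinct $2$-abelian-squares $u^{2},u^{4},u^{8},u^{16},u^{32}$, so no $32$nd power lies in $L$. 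By Lemma~\ref{Lyndonprop}, a factorial language with arbitrarily long words must contain arbitrarily long Lyndon words or arbitrarily large powers; since $L$ has no large powers, it suffices to prove that the Lyndon words of $L$ have bounded length, i.e.\ that $L$ is finite.

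Then I would run the exhaustive search using the Lyndon reduction exactly as in the proof of Proposition~\ref{nocube}: build the tree of prefixes of binary Lyndon words, at each node keeping the set $D$ of distinct factors of the current word that are $2$-abelian-squares (a candidate factor $xy$ with $|x|=|y|$ is tested by comparing $\Psi_2(x)$ with $\Psi_2(y)$ together with the first letters, via the characterisation of $2$-abelian equivalence recalled in the preliminaries), and pruning a branch as soon as $|D|=5$. Termination of this search certifies, through the previous paragraph, that $L$ is finite. Finally, either one reconstructs all words of $L$ from the finitely many Lyndon words found — via the Lyndon factorisation, each factor lying in $L$ by factoriality and occurring with multiplicity bounded by the no-large-powers estimate — and reads off that the longest has length exactly $92$; or, now knowing the plain backtracking search terminates, one simply reruns it without the Lyndon restriction up to depth $93$ and confirms the maximum, which yields the bound in the statement.

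The main obstacle is purely computational: the tree of binary words with at most $4$ distinct $2$-abelian-squares, though finite, is large, and the real content of the argument is that it is finite \emph{at all} — which is precisely what Lemma~\ref{Lyndonprop} together with the elementary power estimate above provide, so that the search may legitimately be stopped. Some care is also needed in the $2$-abelian test itself: two halves of length at most $3=2\cdot 2-1$ are $2$-abelian equivalent only if equal, whereas longer halves genuinely require comparing the multiplicities of length-$2$ factors, so a correct implementation of $\approx_{a,2}$ on factors is essential for the count of distinct squares, and hence the bound $92$, to be correct.
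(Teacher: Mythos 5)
Your argument is correct, but it is more elaborate than what the paper does: the paper's proof of this proposition is simply a direct exhaustive computer search (as for Proposition~\ref{g34lower}), and that is all that is logically required. Since the set $L$ of binary words with at most $4$ distinct $2$-abelian-squares is closed under taking prefixes, the statement is equivalent to ``no word of length $93$ lies in $L$'', so a plain backtracking search of depth $93$ terminates by construction and proves the proposition outright; no finiteness argument is needed to ``legitimately stop'' it. Your detour through Lemma~\ref{Lyndonprop} plus the power estimate ($u^{32}$, or already $u^{10}$, contains five distinct squares) is sound — $L$ is factorial and power-free in the required sense, so it buys you the stronger a priori fact that $L$ is finite, which is the role this machinery plays in Proposition~\ref{nocube}, where no explicit length bound was known in advance and the Lyndon restriction was used to make the search feasible. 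But note that your reconstruction step (recovering all of $L$ from its Lyndon words via the Lyndon factorisation with bounded multiplicities) is itself another nontrivial search rather than a free consequence, and your fallback — rerunning the unrestricted search to depth $93$ — is exactly the paper's method, making the Lyndon phase redundant for the stated bound. Your implementation remarks on $\approx_{a,2}$ (halves of length $\le 3$ equivalent iff equal; otherwise compare $\Psi_2$ and first letters) match the characterisation in the preliminaries and are the right correctness checks for the count of distinct squares; as in the paper, the bound $92$ itself ultimately rests on the machine verification.
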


The following question is a stronger version of Question \ref{question2abeliansquare}:
\begin{Problem}
How many distinct 2-abelian-squares must an infinite binary word contain?
\end{Problem}

\section{Computer searches}\label{sec:concl}

The toughest part to prove the existence of an infinite word with a desired property (say, the property P) is to find a morphism with sufficient conditions whose fixed point has the property P. 
Let $\mathcal{W}$ be the family of finite words with the property P.
This part was done by a computer-assisted search, as follows.
We look for a morphism whose images share a long common prefix and a long common suffix, to avoid the creation of small forbidden patterns when we concatenate two images.
For this, we first selected one ``good'' factor $w=uv$ (with $|u|=|v|$), where $v$ will be the common prefix, and $u$ the common suffix. To optimize the chance of success, we took a factor $w$ which is supposed to appear often in words in $\mathcal{W}$:
we constructed, with a backtrack algorithm, a long random word in $\mathcal{W}$, and selected the factor $w$ among the factors which appear most often is this word.
Then we constructed a family $\mathcal{F}_w$ of words in $\mathcal{W}$ with prefix $u$, suffix $v$, and with a size of at most a fixed number. 
(If the search fails, we try again with another $w$, or with a larger size.)
We constructed the graph with vertex set $\mathcal{F}_w$, and edges $\{x,y\}$ such that $xyx$ and $yxy$ are in $\mathcal{W}$.
Finally, we checked the sufficient conditions on every morphism which corresponds to a clique of size $4$ in the graph. %

As an example, for the morphism $h_3$, which was the most arduous to find among the morphisms presented here, we use the following parameters.
We selected the $88$ factors with a size of $300$ which appear most often among approximately 1.2 million factors we found in the random word.
We computed the families $\mathcal{F}_w$ with words of size at most $700$.
All the families $\mathcal{F}_w$ we computed had a size between 1\,000 and 5\,000.
The graph of the family that gave us the morphism $h_3$ had 58\,680 edges and 1\,977 vertices (density of $\sim$0.03).
With the right parameters ($|w|$ and length of the elements of $\mathcal{F}_w$) it took half a day to find a good morphism. 

The presented approach cannot be directly used to answer positively to Question~\ref{question2abeliansquare}. For every binary word $w$, $\vert wx\vert_{10}=\vert wx\vert_{01}$ (where $x$ is the first letter of $w$). So the matrix $N$ has rank at most $3$, and one cannot find an invertible sub-matrix $M$ of size $4$. 
Thus Question~\ref{question2abeliansquare} is, in spirit, close to M\"akel\"a's questions (Question~\ref{makquestsquares} and Question~\ref{iscube}, the modified version of Question~\ref{makquest}).

\bibliographystyle{plain}
\bibliography{biblio}

\begin{thebibliography}{10}

\bibitem{carpi1}
A.~Carpi.
\newblock On abelian power-free morphisms.
\newblock {\em International Journal of Algebra and Computation},
  03(02):151--167, 1993.

\bibitem{carpi2}
A.~Carpi.
\newblock On the number of abelian square-free words on four letters.
\newblock {\em Disc. Appl. Math}, 81(1-3):155--167, 1998.

\bibitem{Ternarycube}
F.~M. Dekking.
\newblock Strongly non-repetitive sequences and progression-free sets.
\newblock {\em Journal of Combinatorial Theory, Series A}, 27(2):181 -- 185,
  1979.

\bibitem{Entringer1974159}
R.~C. Entringer, D.~E. Jackson, and J.~A. Schatz.
\newblock On nonrepetitive sequences.
\newblock {\em Journal of Combinatorial Theory, Series A}, 16(2):159 -- 164,
  1974.

\bibitem{erdos1}
P.~Erd{\H{o}}s.
\newblock Some unsolved problems.
\newblock {\em The Michigan Mathematical Journal}, 4(3):291--300, 1957.

\bibitem{erdos2}
P.~Erd{\H{o}}s.
\newblock Some unsolved problems.
\newblock {\em Magyar Tud. Akad. Mat. Kutat\'o Int. K\"ozl.}, 6:221--254, 1961.

\bibitem{Evdokimov1}
A.~A. Evdokimov.
\newblock Strongly asymmetric sequences generated by a finite number of
  symbols.
\newblock {\em Dokl. Akad. Nauk SSSR}, 179:1268--1271, 1968.

\bibitem{Fraenkel94howmany}
A.~S. Fraenkel and R.~J. Simpson.
\newblock How many squares must a binary sequence contain?
\newblock {\em The Electronic Journal of Combinatorics}, 2, 1995.

\bibitem{Karhumaki1}
J.~{Karhumaki}, A.~{Saarela}, and L.~Q. {Zamboni}.
\newblock {On a generalization of Abelian equivalence and complexity of
  infinite words}.
\newblock {\em Journal of Combinatorial Theory, Series A}, 120(8):2189--2206,
  2013.

\bibitem{keranen1}
V.~Ker{\"a}nen.
\newblock Abelian squares are avoidable on 4 letters.
\newblock In {\em ICALP}, pages 41--52, 1992.

\bibitem{keranen2}
V.~Ker{\"a}nen.
\newblock New abelian square-free {DT0L}-languages over 4 letters.
\newblock {\em Manuscript}, 2003.

\bibitem{Lothaire}
M.~Lothaire.
\newblock {\em Combinatorics on Words}.
\newblock Cambridge University Press, 1997.

\bibitem{pleasant1}
P.~A.~B. Pleasants.
\newblock Non-repetitive sequences.
\newblock {\em Mathematical Proceedings of the Cambridge Philosophical
  Society}, 68:267--274, 9 1970.

\bibitem{abeliancube}
M.~Rao.
\newblock On some generalizations of abelian power avoidability.
\newblock {\em Manuscript}, 2014.

\bibitem{Thue1}
A.~Thue.
\newblock {\"Uber} die gegenseitige {L}age gleicher {T}eile gewisser
  {Z}eichenreihen.
\newblock {\em Norske Vid. Selsk. Skr. I. Mat. Nat. Kl. Christiania}, 10:1--67,
  1912.

\end{thebibliography}

\end{document}